                                    %

\documentclass[11pt, reqno]{amsart}
\usepackage{amsmath, amsthm, a4, latexsym, amssymb}

\setlength{\topmargin}{0in}
\setlength{\headheight}{0.12in}
\setlength{\headsep}{.40in}
\setlength{\parindent}{1pc}
\setlength{\oddsidemargin}{-0.1in}
\setlength{\evensidemargin}{-0.1in}

\marginparwidth 48pt

\marginparsep 10pt

\oddsidemargin-0.5cm
\evensidemargin-.5cm

\headheight 12pt
\headsep 25pt
\footskip 30pt
\textheight  625pt 
\textwidth 170mm
\columnsep 10pt
\columnseprule 0pt
\setlength{\unitlength}{1mm}

\setlength{\parindent}{20pt}
\setlength{\parskip}{2pt}

\def\@rmrk#1#2{\refstepcounter
    {#1}\@ifnextchar[{\@yrmrk{#1}{#2}}{\@xrmrk{#1}{#2}}}

\makeatletter\@addtoreset{equation}{section}\makeatother

 \sloppy
 \parskip 0.8ex plus0.3ex minus0.2ex
 \parindent0.0em

\newfont{\bfit}{cmbxti10 scaled 2000}
\newfont{\biggi}{cmr12 scaled 2000}
\newtheorem{step}{STEP}

\newcommand{\bes}{\begin{step}}
\newcommand{\es}{\end{step}}

 \newcommand{\essinf}{{\rm essinf}\,}
 \newcommand{\R}{\mathbb{R}}
 \newcommand{\Z}{\mathbb{Z}}

 \newcommand{\me}{\mathbb{E}}
 
 \renewcommand{\P}{\mathbb{P}}
 

 \newcommand{\skria}{{\mathcal A}}

 \newcommand{\skrid}{{\mathcal D}}

 \newcommand{\skrif}{{\mathcal F}}
 \newcommand{\skrig}{{\mathcal G}}

 \newcommand{\skrik}{{\mathcal K}}
 \newcommand{\skril}{{\mathcal L}}
 \newcommand{\skrim}{{\mathcal M}}

 \newcommand{\skrip}{{\mathcal P}}

 \newcommand{\skrit}{{\mathcal T}}

 \newcommand{\skrix}{{\mathcal X}}
 
 \newcommand{\skriz}{{\mathcal Z}}
 
 \newcommand{\sfrac}[2]{\mbox{$\frac{#1}{#2}$}}

\def\1{{\mathchoice {1\mskip-4mu\mathrm l}      
{1\mskip-4mu\mathrm l}
{1\mskip-4.5mu\mathrm l} {1\mskip-5mu\mathrm l}}}

\newcommand{\eq}{\begin{equation}}
\newcommand{\en}{\end{equation}}


\renewcommand{\subsection}{\secdef \subsct\sbsect}
\newcommand{\subsct}[2][default]{\refstepcounter{subsection}
\vspace{0.15cm}
{\flushleft\bf \arabic{section}.\arabic{subsection}~\bf #1  }
\nopagebreak\nopagebreak}
\newcommand{\sbsect}[1]{\vspace{0.1cm}\noindent
{\bf #1}\vspace{0.1cm}}

\newtheorem{theorem}{Theorem}[section]
\newtheorem{lemma}[theorem]{Lemma}
\newtheorem{cor}[theorem]{Corollary}

\newtheoremstyle{thm}{1.5ex}{1.5ex}{\itshape\rmfamily}{}
{\bfseries\rmfamily}{}{2ex}{}

\newtheoremstyle{rem}{1.3ex}{1.3ex}{\rmfamily}{}
{\itshape\rmfamily}{}{1.5ex}{}
\theoremstyle{rem}

\refstepcounter{subsection}

\def\thebibliography#1{\section*{reference}
  \list%
  {\arabic{enumi}.}
    {\settowidth\labelwidth{[#1]}\leftmargin\labelwidth
    \advance\leftmargin\labelsep
    \parsep0pt\itemsep0pt
    \usecounter{enumi}}
    \def\newblock{\hskip .11em plus .33em minus .07em}
    \sloppy                   
    \sfcode`\.=1000\relax}



 \begin{document}
\title[Lossy  version  of  AEP for Hierarchical Data Structures]
{\Large Lossy  Asymptotic Equipartition property for Hierarchical Data Structures}

\author[Kwabena Doku-Amponsah]{}

\maketitle
\thispagestyle{empty}
\vspace{-0.5cm}

\centerline{\sc{By Kwabena Doku-Amponsah}}
\renewcommand{\thefootnote}{}
\footnote{\textit{Mathematics Subject Classification :} 94A15,
 94A24, 60F10, 05C80} \footnote{\textit{Keywords: } Asymptotic equipartition Property, rate-distortion theory,  empirical measure,perron frobenious eigenvalue, perron frobenious eigenvector, weak  irreducibility,  relative  entropy.}
\renewcommand{\thefootnote}{1}
\renewcommand{\thefootnote}{}
\footnote{\textit{Address:} Statistics Department, University of
Ghana, Box LG 115, Legon,Ghana.\,
\textit{E-mail:\,kdoku@ug.edu.gh}.}
\renewcommand{\thefootnote}{1}
\centerline{\textit{University of Ghana}}

\begin{quote}{\small }{\bf Abstract.}
This  paper  presents a rate-distortion theory for  hierarchical  networked   data structures  modelled  as   tree-indexed multitype  process.  To  be  specific, this  paper gives a  generalized   Asymptotic Equipartition Property (AEP)   for  the  Process.  The general  methodology of proof  of  the  AEP  are  process level  large  deviation  principles   for   suitably  defined  \emph{empirical  measures}  for  muiltitype Galton-Watson  trees.
\end{quote}\vspace{0.5cm}

\section{Introduction}
Rate  distortion  theory (RDT)  play  crucial  role   in  approximate pattern-matching  in  information  theory.  It  provides  the mathematical foundations for lossy data compression; it  takes  care of  the problem of looking  for  the minimal number of bits per symbol, as measured by the rate $R,$ that should be transmitted  over a channel, so that the source (input signal) can be approximately decipher at the receiver (output signal) without exceeding a given distortion $D.$ The  RDT  is  mostly  centered  around  a  lossy  version of  the AEP, see example \cite{CT91}.\\

Several lossy versions  of  the  AEP  have  been  formulated  for  linear  data  sources  including  stationary ergodic random  fields  on  the $\Z^d,$  the  $d-$  dimensional  Lattice. See  example \cite{DK02}  and  the  reference  therein.  This lossy AEP   have  been  applied to  strengthened  versions  of Shannon's  direct  source coding and  universal  coding  theorems, characterize the performance of "mismatched"  code books in  lossy  data compression, analyse the  performance  pattern-matching algorithms  for lossy compression (including  the  Lempel-Ziv  schemes), determine the  first  order asymptotics  of  waiting times( with distortion) between  stationary process and  characterize the  best  achievable  rate  of  weighted  codebooks  as an  optimal sphere-covering  exponent.  See \cite{DK02}.  In  Doku-Amponsah\cite{DA10} an  AEP   has  been  found  for  hierarchical  structured  data. Such naturally tree-like data exists and  are usually encountered  in communication studies, demographic studies, biological population studies and the field of physics. Example, the age structure of a given population is best modelled by genealogical trees. The lossy  version  of the  AEP in  \cite{DA10}  is  yet  to  be developed.\\

In  this  paper  we  develop  a Lossy  AEP  for  hierarchical  data  structures  modelled  as  multitype  Galton-Watson  trees. To  be  specific  about  this methodology, we   use  LDP   for  the  empirical  offspring  measure  of  the critical,  irreducible  multitype  Galon-Watson trees,  see  \cite{DA06}, to  prove  an LDP  for  two  dimensional  multitype  Galton-Watson  trees. Using  this  LDP  together  with  the techniques  employed  by  Dembo and Kontoyiannis~\cite{DK02} for  the  random field on  $\Z^2$  we obtain the  proof of  the Lossy AEP for  the hierarchical  data  structures. \\

The  outline  of  the  paper  is  given  as  follows. Generalized AEP for Multitype Galton-Watson Process  section  contain the  main  result  of  the  paper, Theorem~\ref{AEP1}.  LDP for two-dimensional multitype galton-watson process  section  gives  processs level LDP's, Theorem~\ref{AEP2}  and \ref{AEP4},   which  form  the  bases of  the  proof the  main  result of  the  paper.   Proof of Theorem~\ref{AEP1}, ~\ref{AEP2} and ~\ref{AEP4} section  provides  the  proofs  of  all  Process  Level  LDP's  for  the  paper  and  hence  the  main  result  of  the  paper.

\section{Generalized  AEP for  Multitype Galton-Watson Process}
\subsection{Main Result}

Consider  two  multitype  Galton-Watson  processes   $X=\big\{ (X(v), C_X(v)):\, v\in V\big\}$  and    $Y=\big\{ (Y(v), C_Y(v)):\, v\in V\big\}$  which  take  values  in  $\skrit=\skrit(\skrix)$  and  $\hat{\skrit}=\hat{\skrit}(\skrix),$ resp.,  the  spaces  of  finite  trees on  $\skrix.$  We  equip  $\skrit(\skrix)$, $\hat{\skrit}(\skrix)$  with  their  Borel $\sigma$ fields $\skrif$  and  $\hat{\skrif}.$  Let  $\P_x$  and  $\P_y$  denote  the  probability measures  of the entire  processes $X$ and  $Y.$    By  $\skrix$  we  denote a  finite  alphabet  and  write  $\displaystyle\skrix_k^*=\bigcup_{n=0}^k\{n\} \times
\skrix^n,$  where  $k\in\Z^{+}.$ We always  assume  that  $X$  and  $Y$  are  independent  of  each  other.

Throughout the  rest  of  the  article  we  will  assume that  $X$  and  $Y$  are  irreducible, critical  multitype  Galton-Watson  processes. See example \cite{DMS03}. For  $n\ge 1$,  let  $P_n$  denote  the  marginal  distribution  of  $X$  given  $|V(T)|=n$  taking with  respect  to  $\P_x$ and  $Q_n$  denote  the  marginal distribution  $Y$  given $|V(T)|=n$  with  respect  to $\P_y.$  Let  $\rho:\skrix\times\skrix^{*}\times\skrix\times\skrix^{*}\to[0,\infty)$  be  an arbitrary  non-negative  function and  define  a  sequence of  single-letter  distortion  measures  $\rho^{(n)}:\skrit\times\hat{\skrit}\to[0,\infty),$  $n\ge 1$  by
$$\rho^{(n)}(x,y)=\frac{1}{n}\sum_{v\in V}\rho\Big(\skria_x(v),\,\skria_y(v)\Big),$$

where $\skria_x(v)=(x(v), c_x(v))$  and  $\skria_y(v)=(y(v), c_y(v)).$  Given  $d\ge 0$   and  $x\in\skrit$ ,  we  denote the  distortion-ball  of  radius  $d$  by   $$B(x,d)=\Big\{y\in\hat{\skrit}:\,\, \rho^{(n)}(x,y)\le d\Big\}.$$

Theorem~\ref{AEP1}  below  is  the  generalized  Shannon-McMillan-Breiman  Theorem  or Lossy Asymptotic Equipartition  Property for  the  hierarchical  data  structures. Define  the  matrix  $A:\skrix^2\times\skrix^2\to \R_+\bigcup\{0\}$  by   $$A[(a,\hat{a}),(b,\hat{b})]=\sum_{(c,\hat{c})\in{\skrix^{*}}^2}m(a,c)m(\hat{a},\hat{c})\skrik_x\{c \,|\,b\}\skrik_y\{\hat{c} \,|\,\hat{b}\}.$$

 By  $x\,\skrid\, p$  we  mean  $x$  has distribution  $p.$ For $\pi$  the  eigen vector   corresponding  to  the  largest  eigen  value  $1$  of  the  matrix  $A,$ we  write  $$d_{av}=\langle \log \langle e^{t\rho(\skria_X,\,\skria_Y)},\pi_1\otimes \skrik_x\rangle,\pi_2\otimes\skrik_y\rangle$$  and  assume $d_{min}^{(n)}=\me_{P_n}\big[\essinf_{Y\,\skrid\, Q_n}\rho^{(n)}(X,Y)\big]$  converges to  $D_{min}.$  For  $n>1,$   we  write  $$R_n(P_n, Q_n, d):=\inf_{V_n}\Big\{\frac{1}{n}H(V_n\,\|\,P_n\times Q_n):\,V_n\in \skrim(\skrit\times\hat{\skrit})\Big\}$$
and  write $$d_{min}^{\infty}:=\inf\Big\{d\ge 0:\,\sup_{n\ge 1}R_n(P_n, Q_n, d)<\infty\Big\}.$$

 We  call  $\nu\in\skrim[{(\skrix\times\skrix_k^*)}^2]$  with  marginals  $\nu_1$  and  $\nu_2$  respectively,   shift-invariant   if    $$\nu_{1,1}(a)=\sum_{(b,c)\skrix\times\skrix^{*}}m(a,c)\nu_1(b,c),  \,  \mbox{ $a\in\skrix$}$$  and
$$\nu_{2,1}(a)=\sum_{(b,c)\skrix\times\skrix^{*}}m(a,c)\nu_2(b,c),  \,  \mbox{ $a\in\skrix$,}$$

$m(a,c)$  is  the  multiplicity  of  the  symbol  $a$  in  $c.$  See \cite{DMS03}.  We  define  the  rate  function  $I_1:\skrim[{(\skrix\times\skrix_k^*)}^2]\to [0,\, \infty]$  by

\begin{equation}\label{AEP3}
\begin{aligned}
I_1(\nu)= \left\{ \begin{array}{ll}H\big(\nu\,\|\,\nu_{1,1}\otimes\skrik_x\times\nu_{2,1}\otimes\skrik_y), & \mbox{if $\nu$  is  shift-invariant,}\\

 \infty & \mbox{otherwise.}

\end{array}\right.
\end{aligned}
\end{equation}

\begin{theorem}\label{AEP1}
Suppose  $X$  and  $Y$  are  critical, weakly irreducible Multitype Galton-Watson  trees  with  transition  kernels  $\skrik_x$  and  $\skrik_y.$ Assume $\rho$ are  bounded  function.  Then,
\begin{itemize}

\item[(i)] with  $\P_x-$ probability $1,$ conditional  on  the  event $\big\{X=x,V(T)=n\big\}$  the  random  variables  $\Big\{ \rho^{(n)}(x,Y)\Big\}$ satisfy  an  LDP   with  deterministic,  convex  rate-function  $$I_{\rho}(z):=\inf_{\omega}\Big\{I_1(\omega):\, \langle\rho, \,\omega\rangle=z\Big\}.$$
\item [(ii)] for  all $d\in(d_{min},\,d_{av})$,  except  possibly  at $d=d_{min}^{\infty}$
\begin{equation}\label{AEP11}
\lim_{n\to\infty}-\frac{1}{n}\log Q_n^{x}\Big(B(X,D)\Big)=R(\P^{x},\P^{y},d)\,\,\mbox {almost  surely,}\end{equation}
where
$R(p,q,D)=\inf_{\nu}H(\nu\,\|\,p\times q).$
\end{itemize}
\end{theorem}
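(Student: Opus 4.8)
The plan is to recognise the normalised distortion $\rho^{(n)}(x,Y)$ as a bounded continuous linear functional of the joint empirical offspring measure of the pair $(x,Y)$, and then to push the process-level large deviation principles of Theorems~\ref{AEP2} and~\ref{AEP4} forward through the contraction principle and a Laplace-type estimate. Concretely, write $\nu_n$ for the empirical offspring measure of $(x,Y)$ on ${(\skrix\times\skrix_k^*)}^2$, which records the frequency of the configurations $\big(\skria_x(v),\skria_y(v)\big)$ over $v\in V$. Directly from the definition of $\rho^{(n)}$ one has $\rho^{(n)}(x,Y)=\langle\rho,\nu_n\rangle$, and since $\rho$ is bounded the map $\nu\mapsto\langle\rho,\nu\rangle$ is bounded and continuous on $\skrim[{(\skrix\times\skrix_k^*)}^2]$.

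For part~(i) I would invoke the quenched process-level LDP (Theorem~\ref{AEP4}): for $\P_x$-almost every realisation $x$, conditional on $\{X=x,V(T)=n\}$ the measures $\nu_n$ satisfy an LDP with the deterministic rate function $I_1$ of~\eqref{AEP3}. Because $\nu\mapsto\langle\rho,\nu\rangle$ is continuous, the contraction principle transfers this to an LDP for $\rho^{(n)}(x,Y)=\langle\rho,\nu_n\rangle$ with rate function $I_\rho(z)=\inf\{I_1(\omega):\langle\rho,\omega\rangle=z\}$. Convexity of $I_\rho$ follows from the convexity of $I_1$ (a relative entropy, hence convex in its argument) together with the linearity of the constraint: the infimal projection of a convex function onto an affine slice remains convex.

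For part~(ii), note $Q_n^{x}\big(B(X,D)\big)=\P\big(\rho^{(n)}(x,Y)\le D\mid V(T)=n\big)$, so by the LDP of part~(i) applied to the half-line $(-\infty,D]$ and the convexity of $I_\rho$,
\begin{equation}
\lim_{n\to\infty}-\tfrac1n\log Q_n^{x}\big(B(X,D)\big)=\inf_{z\le D}I_\rho(z)=\inf\big\{I_1(\nu):\langle\rho,\nu\rangle\le D\big\}.
\end{equation}
It then remains to identify this variational quantity with $R(\P^{x},\P^{y},d)=\inf_\nu H(\nu\,\|\,\P^x\times\P^y)$; following the duality computation of Dembo and Kontoyiannis~\cite{DK02}, one rewrites the Galton--Watson reference measure so that $I_1(\nu)$ coincides with $H(\nu\,\|\,\P^x\times\P^y)$ on shift-invariant $\nu$, whence~\eqref{AEP11}. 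The requirement $d\in(d_{min},d_{av})$ is exactly what makes the constraint set non-trivial: for $d>d_{min}$ the ball $B(X,D)$ is non-empty, so the rate is finite, while for $d<d_{av}$ the optimal $z$ lies strictly below the typical distortion, so the infimum is strictly positive and attained in the interior.

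The technical heart, and the step I expect to be the main obstacle, is the almost-sure (quenched) character of both claims: the LDP in~(i) must hold simultaneously for $\P_x$-a.e.\ source tree $x$ with a rate function free of $x$, and the limit in~\eqref{AEP11} must be shown to be deterministic. This forces one to control the fluctuations of the first-coordinate marginal of $\nu_n$ about the Perron--Frobenius eigenvector $\pi$ of $A$ and to verify that the defining shift-invariance of $I_1$ survives in the limit. Finally the value $d=d_{min}^{\infty}$ must be excluded, since there $d\mapsto\inf_{z\le d}I_\rho(z)$ can fail to be continuous (equivalently $\inf_{z\le d}$ and $\inf_{z<d}$ disagree), so the large deviation upper and lower bounds need not coincide.
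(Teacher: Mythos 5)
Your part (i) is essentially the paper's own argument: the paper likewise writes $\rho^{(n)}(X,Y)=\langle\rho,\skril_n\rangle$ and pushes the process-level LDP through the bounded continuous map $\omega\mapsto\langle\rho,\omega\rangle$, obtaining exactly your $I_\rho$. (One small correction: the relevant input is Theorem~\ref{AEP2}, the LDP for $\skril_n$ on $\skrim[{(\skrix\times\skrix_k^*)}^2]$ --- which is precisely the measure you call $\nu_n$ --- not Theorem~\ref{AEP4}, which lives on the space $\skrix^2\times{\skrix^{*}}^2$ before the coordinate permutation $\phi$.) For part (ii) you take a genuinely different, primal route: you evaluate the LDP of (i) on the half-lines $(-\infty,D]$ and $(-\infty,D)$ and then must identify $\inf\{I_1(\nu):\langle\rho,\nu\rangle\le D\}$ with $R(\P^{x},\P^{y},d)$. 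The paper works entirely on the dual side: it introduces the conditional log-moment generating functions $\Lambda_n(t)$, computes their limit $\Lambda_\infty(t)$ via the Perron--Frobenius eigenvector $\pi$ of the matrix $A$, identifies $R_n(P_n,Q_n,d)=\Lambda_n^{*}(d)$ by the convex-duality argument of \cite[Proposition~2]{DK02}, and concludes from uniform convergence of $\Lambda_n^{*}$ to $\Lambda_\infty^{*}$ on compacts. The two routes are conjugate to each other. Yours is shorter and makes transparent why $d=d_{min}^{\infty}$ must be excluded (it is exactly where $\inf_{z\le d}I_\rho(z)$ and $\inf_{z<d}I_\rho(z)$ may disagree); the paper's dual route delivers the thresholds $d_{av}$ and $d_{min}$ explicitly (as $\Lambda_\infty'(0)$ and the limiting slope of $\Lambda_n$ at $-\infty$) and obtains the almost-sure character of the limit more cheaply, since pointwise a.s.\ convergence of the convex functions $\Lambda_n(t)$ upgrades to locally uniform convergence of their conjugates. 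Two caveats: the identification step you defer to \cite{DK02} is precisely the paper's $R_n=\Lambda_n^{*}$ computation, so the duality cannot be avoided on the primal route either; and the quenched-versus-annealed issue you rightly single out as the main obstacle (Theorems~\ref{AEP2} and~\ref{AEP4} are stated as LDPs for the joint tree, not conditionally on $X=x$) is in fact passed over in the paper's proof as well, so your proposal is no less complete than the original on that point.
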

\subsection{Application~~\cite{DA16}}

{\bf  Mutations  in  mitochondrial  DNA.} Mitochondria  are
organelles  in  cells  carrying  their  own  DNA.  Like  nuclear
DNA,  mtDNA  is  subject  to  mutations  which  may  take   the form
of  base  substitutions,  duplication  or  deletions.  The
population mtDNA  is  modelled  by  two-type  process  where  the
units  are  $1$  (normals)  and  $0$ (mutant),  and  the  links are
mother-child  relations.  A  normal  can  give  birth  to  either
all normals  or,  if   there  is  mutation,   normals  and mutants.
Suppose  the  latter  happens  with probability or  mutation  rate
$\alpha\in[0,\,1].$  Mutants can only give birth to mutants. A DNA molecule
may also die without reproducing. We  denote  by  $\emptyset$ the
 event absence of offspring. 
 Assume that the population is started from one normal
ancestor. Suppose the offspring kernel $\skrik$ is given by
$$\begin{aligned}
&\skrik\big\{(2,
a_1,a_2)\,|\,1\big\}=\Big(\frac{1}{2}\Big)\prod_{k=1}^{2}K_{\alpha}\{a_k\,|\,1\},\\
&\skrik\big\{(2,
a_1,a_2)\,|\,0\big\}=\Big(\frac{1}{2}\Big)\prod_{k=1}^{2}K_{\alpha}\{a_k\,|\,0\},\\
\end{aligned}$$

where $K_{\alpha}\{\emptyset\,|\,1\}=0,$
$K_{\alpha}\{0\,|\,1\}= \alpha,$
$K_{\alpha}\{1\,|\,1\}=(1-\alpha),$ $K_{\alpha}\{0\,|\,0\}=1$  and
$K_{\alpha}\{\emptyset\,|\,0\}=0.$  Note  that the  matrix  $A$  given  by
$$A=\begin{pmatrix}

1    &  0    &    0       & 0\\
\alpha & 1-\alpha &  0       & 0\\
\alpha  &0    &   1-\alpha & 0\\
\alpha^2 &\alpha(1-\alpha) &\alpha(1-\alpha) & (1-\alpha)^2
\end{pmatrix}$$

is  weakly  irreducible  $4\times4$  matrix,  see \cite{DMS03},  with   largest  eigen value  $1$  and  the  corresponding  eigen  vector  given  by
  $$ \pi=\begin{pmatrix}
  &\pi(0,0)& \\
  &\pi(0,1) &\\
  & \pi(1,0) &\\
  & \pi(1,1)&
  \end{pmatrix}=\begin{pmatrix}
  &\sfrac{1}{4} & \\
  &\sfrac{1}{4} & \\
  & \sfrac{1}{4} & \\
  & \sfrac{1}{4} &
  \end{pmatrix}$$

 Therefore,  Theorem~\ref{AEP1}  hold  with  the  distortion-rate

 \begin{equation}\label{AEP11}
\begin{aligned}
R(P,Q,D)=\left\{ \begin{array}{ll} 0, & \mbox{ if $D\ge  \sfrac {3}{4}(1-\alpha)+\sfrac{1}{4}\alpha(1-\alpha)^3$,}\\

 \infty & \mbox{otherwise.}

\end{array}\right.
\end{aligned}
\end{equation}

\section{LDP  for  two-dimensional  multitype  galton-watson process}

Given  a  probability  measure  $\mu:\skrix\times\hat{\skrix}\to [0,1]$  and  transition  kernel  $\skrik$  we  define  the  two-dimensional  multype  Galton Watson  tree  as  follow:
\begin{itemize}
\item  Assign  the  root $\eta$  type  $(X(\eta),Y(\eta))$  independently  according  to  $\mu.$
\item  Give  any  vertex $v$  with  type  $(a,b)$   offspring  types  and  number  of  springs  $C_{X,Y}(v)$  independent  everything according  to $$\skrik\Big\{(C_{X,Y}(v)=(c_a,c_b)\|(a,b)\Big\}=\skrik_x\Big\{c_a\|a\Big\}\skrik_y\Big\{c_b\|b\Big\}$$
\end{itemize}
We  define  the  process-level  empirical  measure  $\skril_n$  induced  by  $X$  and  $Y$  on  $\skrit\times\hat{\skrit}$   by
$$\skril_n(a_x,a_y)=\frac{1}{n}\sum_{v\in V}\delta_{\big(\skria_X(v),\,\skria_Y(v)\big)}(a_x,a_y), \, \mbox{ for $(a_x,a_y)\in\skrim[{(\skrix\times\skrix_k^*)}^2].$ }$$
Note  that  we  have
$$\begin{aligned}
\skril_n\otimes \phi^{-1}\big((x(v),y(v)),\, c_{x,y}(v)\big)&=\frac{1}{n}\sum_{v\in V}\delta_{\big(\skria_X(v),\,\skria_Y(v)\big)}\otimes \phi^{-1}\big((x(v),y(v)),\, c_{x,y}(v)\big)\\
&=\frac{1}{n}\sum_{v\in V}\delta_{\big((X(v),Y(v)),\, C_{X,Y}(v)\big)}\big((x(v),y(v)),\, c_{x,y}(v)\big)\\
&:=\tilde{\skril}_n\big((x(v),y(v)),\, c_{x,y}(v)\big),
\end{aligned}$$

where  $\phi(\skria_x(v),\skria_y(v))=\big((x(v),y(v)),\, c_{x,y}(v)\big).$  The  next  Theorem  which is  the  LDP for  $\skril_n$  of  the  process  $X,Y$  is  the  main ingredient  in  the  proof  of  the  Lossy  AEP.
\begin{theorem}\label{AEP2}
The  sequence  of empirical  measures  $\skril_n$ satisfies a  large  deviation  principle  in  the  space  of  probability  measures  on   $(\skrix\times{\skrix_k^*})^2$  equipped  with  the  topology of  weak  convergence,  with  convex,  good rate-function  $I_1.$

\end{theorem}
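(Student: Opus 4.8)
The plan is to realise $\skril_n$ as the empirical offspring measure of a \emph{single} multitype Galton--Watson tree and then invoke the process-level LDP of \cite{DA06}. First I would pass from $\skril_n$ to its image $\tilde{\skril}_n$ under the relabelling map $\phi(\skria_x(v),\skria_y(v))=\big((x(v),y(v)),\,c_{x,y}(v)\big)$. Since $\phi$ is a homeomorphism between $(\skrix\times\skrix_k^*)^2$ and $(\skrix^2)\times(\skrix_k^*)^2$, the induced pushforward on probability measures is continuous with continuous inverse; hence by the contraction principle (applied to $\phi$ and to $\phi^{-1}$) an LDP for $\tilde{\skril}_n$ is equivalent to one for $\skril_n$, with the two rate functions related by composition with $\phi$. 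It therefore suffices to establish the LDP for $\tilde{\skril}_n$.

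Next I would observe that the two-dimensional process $(X,Y)$ is itself a multitype Galton--Watson tree on the enlarged type space $\skrix^2$, with root law $\mu$ and offspring kernel $\skrik=\skrik_x\otimes\skrik_y$, so that $\tilde{\skril}_n$ is exactly its empirical offspring measure. To apply \cite{DA06} I must check that this combined tree is critical and weakly irreducible. Criticality follows from criticality of $X$ and $Y$: the relevant branching matrix is $A[(a,\hat a),(b,\hat b)]=\sum_{(c,\hat c)}m(a,c)m(\hat a,\hat c)\skrik_x\{c|b\}\skrik_y\{\hat c|\hat b\}$, whose Perron--Frobenius eigenvalue is $1$ (it factorizes as the product of the two critical eigenvalues of $X$ and $Y$). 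Weak irreducibility of the combined tree is precisely weak irreducibility of $A$, which is assumed in the hypotheses and verified in the example of Section~2.2. With these conditions in hand, \cite{DA06} yields an LDP for $\tilde{\skril}_n$ on $(\skrix^2)\times(\skrix_k^*)^2$ in the weak topology, with a convex good rate function given by a relative entropy of $\nu$ against the product of its empirical type law with the offspring kernel, equal to $+\infty$ off the shift-invariant measures.

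It then remains to identify this rate function with $I_1$ of \eqref{AEP3}. Here I would use the product structure $\skrik=\skrik_x\otimes\skrik_y$ together with the independence of $X$ and $Y$ to factor the reference measure as $\nu_{1,1}\otimes\skrik_x\times\nu_{2,1}\otimes\skrik_y$, and to check that the single shift-invariance constraint for the combined tree decouples into the two coordinate-wise conditions on $\nu_{1,1}$ and $\nu_{2,1}$. The chain rule for relative entropy then converts the combined-tree rate function into $H\big(\nu\,\|\,\nu_{1,1}\otimes\skrik_x\times\nu_{2,1}\otimes\skrik_y\big)$. Goodness, lower semicontinuity and convexity of $I_1$ are inherited from \cite{DA06}, but can also be read off directly, since the shift-invariance conditions are linear in $\nu$ and hence cut out a closed convex set, while relative entropy is jointly convex and lower semicontinuous in the weak topology on the finite space $(\skrix\times\skrix_k^*)^2$.

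I expect the rate-function identification of the last paragraph to be the main obstacle, because it is the precise point at which the independence of the two trees must be used to reduce the joint empirical reference to a product of the two single-coordinate references; the accompanying verifications, namely that the combined shift-invariance constraint splits into the two stated conditions and that criticality and weak irreducibility of the product tree descend from those of $X$ and $Y$ through the structure of $A$, are the steps most in need of care.
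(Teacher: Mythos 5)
Your proposal follows essentially the same route as the paper: both reduce the LDP for $\skril_n$ to one for $\tilde{\skril}_n$ via the relabelling bijection $\phi$ and the contraction principle, and both obtain the latter by viewing $(X,Y)$ as a single multitype Galton--Watson tree on the enlarged type space $\skrix^2$ with offspring kernel $\skrik_x\times\skrik_y$ and invoking the empirical-offspring-measure LDP of \cite{DA06}. The step you rightly flag as delicate --- identifying the transported rate function with the product-reference form $H\big(\nu\,\|\,\nu_{1,1}\otimes\skrik_x\times\nu_{2,1}\otimes\skrik_y\big)$ of \eqref{AEP3} --- is also the step the paper asserts without detail, so your outline is no less complete than the published argument.
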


The  proof  of  Theorem\ref{AEP2}  above   is dependent  on  the LDP  for  $\tilde{\skril}_n$  given  below:

\begin{theorem}\label{AEP4}
The  sequence  of empirical  measures  $\tilde{\skril}_n$ satisfies a  large  deviation  principle  in  the  space  of  probability  measures  on   $\skrix^2\times{\skrix^{*}}^2$  equipped  with  the  topology of  weak  convergence,  with  convex,  good rate-function

\begin{equation}\label{AEP5}
\begin{aligned}
I_2(\omega)= \left\{ \begin{array}{ll}H\big(\omega\,\|\,\omega_1\otimes\skrik_x\times\skrik_y), & \mbox{if
$\omega$  is  shift-invariant,}\\
\infty & \mbox{otherwise,}

\end{array}\right.
\end{aligned}
\end{equation}

where  $\omega_1\otimes\skrik_x\times\skrik_y\big((a,b),(c_{a},c_{b})\big)=\omega_1(a,b)\skrik_x\{c_a \,|\,a\}\skrik_y\{c_b \,|\,b\}.$
\end{theorem}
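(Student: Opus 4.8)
The plan is to recognise that the two-dimensional process $(X,Y)$ is, by the very construction preceding the statement, a single multitype Galton-Watson tree whose type space is the product alphabet $\skrix^2$, whose offspring space is ${\skrix^*}^2$, and whose offspring law is the product kernel $\skrik:=\skrik_x\times\skrik_y$. Consequently $\tilde{\skril}_n$ is exactly the empirical offspring measure of this enlarged tree, and Theorem~\ref{AEP4} should follow by invoking the LDP for empirical offspring measures of critical, weakly irreducible multitype Galton-Watson trees established in \cite{DA06} (cf. \cite{DMS03}), provided the product kernel inherits the required criticality and irreducibility.

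First I would verify the hypotheses. The mean matrix of the combined tree is precisely the matrix $A$ displayed above, which factorises as the Kronecker product $A=M_x\otimes M_y$ of the individual mean matrices $M_x[a,b]=\sum_c m(a,c)\skrik_x\{c\,|\,b\}$ and $M_y[\hat a,\hat b]=\sum_{\hat c} m(\hat a,\hat c)\skrik_y\{\hat c\,|\,\hat b\}$. Since $X$ and $Y$ are each critical, $M_x$ and $M_y$ have Perron-Frobenius eigenvalue $1$, whence $A$ has largest eigenvalue $1\cdot 1=1$ and the combined tree is critical. Weak irreducibility of $A$ is then read off from that of $M_x$ and $M_y$ together with the tensor structure of the index set, placing the combined process squarely within the scope of the cited LDP.

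Next I would identify the rate function. The general result of \cite{DA06} assigns to the empirical offspring measure the relative-entropy functional $H(\omega\,\|\,\omega_1\otimes\skrik)$, restricted to those $\omega$ satisfying the mass-balance (shift-invariance) constraint relating the type-marginal $\omega_1$ to the offspring types weighted by the multiplicities $m(\cdot,\cdot)$, and set to $+\infty$ otherwise. Substituting the product kernel $\skrik=\skrik_x\times\skrik_y$ makes the reference measure factor as $\omega_1\otimes\skrik_x\times\skrik_y$, which is exactly $I_2$ in \eqref{AEP5}.

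Finally, for goodness and convexity: since $\skrix$ is finite and, by the definition of $\skrix_k^*$, the offspring configurations are bounded, the space of probability measures on $\skrix^2\times{\skrix^*}^2$ is compact in the weak topology; relative entropy is lower semicontinuous and jointly convex in its first argument, while the shift-invariance condition cuts out a closed convex set, so $I_2$ is a convex, good rate function. The main obstacle I anticipate is not this routine verification but the careful passage from the single-tree LDP of \cite{DA06} to the product setting, in particular checking that weak irreducibility and the precise form of the shift-invariance constraint are correctly inherited by the Kronecker-product kernel $\skrik_x\times\skrik_y$.
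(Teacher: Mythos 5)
Your proposal is correct and follows essentially the same route as the paper: the paper also treats $(X,Y)$ as a single multitype Galton--Watson tree on the product type space $\skriz=\skrix^2$ with offspring law $Q=\skrik_x\times\skrik_y$, identifies $\tilde{\skril}_n$ with its empirical offspring measure, and invokes the single-tree LDP (Corollary~\ref{AEP5}) to read off $I_2$. You are in fact more careful than the paper about verifying criticality and weak irreducibility of the Kronecker-product mean matrix, which the paper passes over in silence.
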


\section{Proof  of  Theorem~\ref{AEP1}, \ref{AEP2} and~\ref{AEP4}}

\subsection{Proof  of  Theorem~\ref{AEP4}}

\begin{cor}[\cite{DA16}]\label{AEP5}
Let $Z$ be a weekly irreducible, critical multitype
Galton-Watson tree with an  offspring law  $Q$ whose second moment
is finite, conditioned to have exactly $n$ vertices. Then, for
$n\to\infty$, the empirical offspring measure~$\skrim_Z$ satisfies an LDP in $\skrip\big[\skriz\times\skriz^*\big]$ with
speed $n$ and the convex, good rate function
\begin{equation}\label{equ-ratecor}
\Phi_Q(\varpi)=\left\{ \begin{array}{ll} H(\varpi \,\|\, \varpi_1\otimes Q) & \,
\mbox{ if \,$\varpi$ is  weak shift-invariant,}\\
\infty & \mbox{ otherwise.}
\end{array} \right.\end{equation}
\end{cor}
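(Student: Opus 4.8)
The plan is to prove this LDP by the method of types for trees, after realizing $\skrim_Z$ as the pair empirical measure of the depth-first exploration of $Z$. Reading the vertices of $Z$ in depth-first (Lukasiewicz) order produces a sequence $(\xi_1,\dots,\xi_n)$ in $\skriz\times\skriz^*$ of (type, offspring-configuration) pairs, one per vertex, and by construction $\skrim_Z$ is exactly the empirical measure of this sequence. Under the unconditioned Galton--Watson law the type of each vertex is prescribed by its parent's configuration and the configuration itself is drawn from $Q(\cdot\,|\,\text{type})$, so that two type-decorated plane trees sharing the same empirical measure $\varpi$ carry the identical Boltzmann weight $\prod_{(a,c)}Q(c\,|\,a)^{n\varpi(a,c)}$ under $\P$. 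Consequently
$$\P\big(\skrim_Z=\varpi,\,|V(Z)|=n\big)=N_n(\varpi)\,\prod_{(a,c)}Q(c\,|\,a)^{\,n\varpi(a,c)}\,(1+o(1)),$$
where $N_n(\varpi)$ counts the type-decorated plane trees on $n$ vertices whose empirical offspring measure equals $\varpi$, and the $o(1)$ absorbs the root contribution.

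Next I would estimate the combinatorial factor. A multitype cycle-lemma count gives $\tfrac1n\log N_n(\varpi)\to H(\varpi)-H(\varpi_1)$, the conditional entropy of the configuration given the type; combining this with the Boltzmann weight and the identity $H(\varpi\,\|\,\varpi_1\otimes Q)=-H(\varpi)+H(\varpi_1)-\sum_{(a,c)}\varpi(a,c)\log Q(c\,|\,a)$ yields
$$-\tfrac1n\log\P\big(\skrim_Z=\varpi,\,|V(Z)|=n\big)\longrightarrow H(\varpi\,\|\,\varpi_1\otimes Q)=\Phi_Q(\varpi).$$
The realizability of $\varpi$ as the limit of genuine trees forces, for each type $a$, a balance between the number of type-$a$ vertices and the number of type-$a$ children produced by all vertices, namely $\varpi_1(a)=\sum_{(b,c)}m(a,c)\,\varpi(b,c)$; this is precisely weak shift-invariance, and when it fails $N_n(\varpi)=0$, so that $\Phi_Q(\varpi)=\infty$, matching the definition.

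I would then remove the conditioning and upgrade the pointwise estimate to full LDP bounds. Since $Q$ is critical with finite second moment, the Lukasiewicz walk is a centred, finite-variance random walk and $\{|V(Z)|=n\}$ is the event that it first reaches $-1$ at time $n$; by the local central limit theorem $\P(|V(Z)|=n)\sim c\,n^{-3/2}$, so dividing by it perturbs the estimate only sub-exponentially and leaves the rate unchanged. The upper bound over closed sets then follows by summing the estimate over the polynomially many admissible types, and the lower bound over open sets by approximating any interior target by realizable rational types; convexity and lower semicontinuity of $\Phi_Q$ are inherited from the standard properties of relative entropy.

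The step I expect to be the main obstacle is controlling the non-compactness of $\skriz^*$: both the goodness of $\Phi_Q$ (compactness of the level sets $\{\Phi_Q\le\alpha\}$) and the exponential tightness of the family $\skrim_Z$ require a uniform bound on the cost of large offspring configurations. This is exactly where the finite-second-moment hypothesis on $Q$ enters, furnishing a super-linear growth estimate for $H(\,\cdot\,\|\,\varpi_1\otimes Q)$ in the offspring size that compactifies the level sets and tightens $\skrim_Z$ after a truncation argument. Establishing this tightness uniformly along the truncation, rather than the entropy computation itself, is the delicate point; once it is in hand, passing from the exact-measure estimate to the weak-convergence LDP is routine.
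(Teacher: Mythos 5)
You should first be aware that the paper you are working against contains no proof of this statement: it is a result quoted wholesale from \cite{DA16} (resting on the framework of \cite{DMS03}), where the LDP for the empirical offspring measure is obtained by exponential change of measure --- the upper bound via an exponential Chebyshev inequality over a class of test functions $g:\skriz\times\skriz^*\to\R$, the lower bound by tilting the offspring law to a kernel whose invariant statistics match the target $\varpi$, with the relative-entropy form of the rate function emerging from convex duality. Your method-of-types argument through the depth-first exploration is therefore a genuinely different route. Its attractions are real: the identity $\P(\skrim_Z=\varpi,\,|V(Z)|=n)=N_n(\varpi)\prod_{(a,c)}Q(c\,|\,a)^{n\varpi(a,c)}$ is exactly right, your entropy bookkeeping $H(\varpi\,\|\,\varpi_1\otimes Q)=-H(\varpi)+H(\varpi_1)-\sum\varpi\log Q$ is correct, and you correctly identify weak shift-invariance as the flow-balance condition forced by realizability (up to the $O(1)$ root defect, which you should track explicitly in the lower-bound approximation).

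There are, however, concrete gaps. The most serious is the assertion that the upper bound follows ``by summing the estimate over the polynomially many admissible types'': $\skriz^*$ is countably infinite (offspring configurations have unbounded length), so the number of empirical types on $n$ vertices is not polynomial in $n$ and the union bound does not close. This is not separable from the tightness issue you defer to the final paragraph --- it \emph{is} that issue, and without a quantitative truncation (cut configurations at size $K$, bound the cost of the escaping mass using $Q(|c|>K)=O(K^{-2})$, let $K\to\infty$ after $n$) neither the upper bound nor the goodness of $\Phi_Q$ is established; with only second moments these estimates are delicate precisely because large-offspring events are only polynomially, not exponentially, rare for a single vertex. Second, the multitype enumeration $\frac1n\log N_n(\varpi)\to H(\varpi)-H(\varpi_1)$ is asserted rather than proved: the multitype analogue of the cycle lemma is a determinantal formula, and one must also verify that weak shift-invariance is \emph{sufficient} (not merely necessary) for $N_n(\varpi)>0$ along realizable rational approximations of a given shift-invariant $\varpi$ --- this sufficiency is exactly what your open-set lower bound consumes, and it is the step where the irreducibility hypothesis must enter, which your sketch never uses. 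Until the truncation estimate and the combinatorial count are supplied, the proposal is a plausible programme rather than a proof.
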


The  proof  of  Theorem~\ref{AEP4} follows  from  Corollary~\ref{AEP5} by   the  contraction  principle,  see \cite{DZ98}  applied  to  the linear  mapping  given  by  $\skrig(\skrim_{(x,y)})=\tilde{\skril}_{n}$,\, $z=(x,y).$ The  rate  function  governing  this  LDP  is  given  by  $$ I_2(\omega)=\Big\{\Phi_{Q}(\varpi):\,\varpi=\omega,\, Q=\skrik\times\skrik\Big\}.$$

We  obtain the  form  of  the  rate  function  $I_2$  in  \eqref{equ-ratecor}  if  we  note  that  $\skrip(\skrix^2\times{\skrix_k^*}^2)$  where  $\skriz=\skrix^2$  and  $\skriz^*={\skrix_k^*}^2.$

\subsection{Proof  of  Theorem~\ref{AEP2}}
\begin{lemma}\label{AEP6}
$\tilde{\skril}\otimes\phi$   obeys  an  LDP   on  the  space   $\skrip\big[\phi(\skrix^2\times{\skrix_k^*}^2)]$ with  good  rate  function $I_2,$
where  $\phi^{-1}\big((a,b),(c_a,c_b)\big)=(\skria_a,\skria_b)$  and  $a_z=(z,c_z).$
\end{lemma}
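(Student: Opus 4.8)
The plan is to transport the large deviation principle for $\tilde{\skril}_n$ (Theorem~\ref{AEP4}) across the homeomorphism $\phi$ using the contraction principle. The key observation is that Lemma~\ref{AEP6} asks for an LDP for the pushforward measure $\tilde{\skril}\otimes\phi$ on the space $\skrip\big[\phi(\skrix^2\times{\skrix_k^*}^2)\big]$, and since $\phi$ is a bijection with $\phi^{-1}\big((a,b),(c_a,c_b)\big)=(\skria_a,\skria_b)$, the map $\nu\mapsto\nu\otimes\phi$ on probability measures is itself a continuous bijection. Thus the hard analytic work has already been done in Theorem~\ref{AEP4}, and what remains is to verify that the transfer is valid and to identify the transported rate function.

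First I would check that $\phi$ is a continuous bijection between the finite spaces $\skrix^2\times{\skrix^*}^2$ and its image $\phi(\skrix^2\times{\skrix_k^*}^2)$, with continuous inverse, so that the induced map on the corresponding spaces of probability measures (with the weak topology) is a homeomorphism. On a finite alphabet this reduces to checking that $\phi$ is injective, which is immediate from the explicit form of $\phi^{-1}$ given in the statement. Second, I would invoke the contraction principle from \cite{DZ98}: if $\tilde{\skril}_n$ satisfies an LDP with good rate function $I_2$ (Theorem~\ref{AEP4}) and $F$ is a continuous map, then $F(\tilde{\skril}_n)$ satisfies an LDP with rate function $J(y)=\inf\{I_2(\omega):F(\omega)=y\}$. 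Here $F$ is the homeomorphism $\omega\mapsto\omega\otimes\phi$, so the infimum is attained at the unique preimage and the transported rate function is simply $I_2(\phi^{-1}(\cdot))$.

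Third, I would argue that the good rate function is preserved under the homeomorphism: since $\phi$ is a continuous bijection with continuous inverse, level sets of $I_2$ map to level sets of the transported rate function, and compactness of level sets (goodness) is preserved. I would then record that, because $\phi$ merely relabels the coordinates of each offspring datum, the transported rate function coincides with $I_2$ under the identification, which is exactly what the lemma asserts.

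\textbf{The main obstacle} is ensuring that the shift-invariance condition appearing in the definition of $I_2$ in \eqref{AEP5} transforms correctly under $\phi$, i.e.\ that a measure $\omega$ is shift-invariant if and only if its image $\omega\otimes\phi$ satisfies the corresponding invariance on the target space. This requires unpacking the relation between the multiplicity function $m(a,c)$, the kernels $\skrik_x,\skrik_y$, and the coordinate reshuffling performed by $\phi$, and verifying that the constraint $\omega_1(a,b)=\sum m(a,c)\,\omega_1(b,c)$ is mapped to the analogous constraint in the $\phi$-coordinates. Since $\phi$ acts by a fixed relabeling that does not mix the type coordinate with the offspring-count coordinate, I expect this verification to be a routine but necessary bookkeeping check rather than a source of genuine difficulty.
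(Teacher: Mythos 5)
Your proposal is correct and follows essentially the same route as the paper: both transfer the LDP of Theorem~\ref{AEP4} through the continuous map $\omega\mapsto\omega\otimes\phi$ induced by the (bijective) relabeling $\phi$, the paper simply unrolling the contraction principle by verifying the upper and lower bounds directly on preimages $\Gamma_\phi=\{\omega:\omega\otimes\phi\in\Gamma\}$ of closed and open sets. Your additional observations (that bijectivity makes the infimum in the contracted rate function trivial, and that goodness and shift-invariance are preserved under the relabeling) are consistent with, and slightly more careful than, what the paper records.
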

\begin{proof}
Let  $\Gamma\in\skrip\big((\skrix\times{\skrix_k^*})^2\big)$  and  write  $ \Gamma_{\phi}=\big\{\omega:\, \omega\otimes\phi\in \Gamma\big\}.$  Note  that  if  $A$  is  closed (open)  then  $\Gamma_{\phi}$  is  closed (open) since  $\rho$  is  bounded.  Now  suppose    $F$  is  closed  subset  of   $\skrip[{(\skrix\times\skrix_k^*)}^2]$   then  we  have
$$\lim_{n\to\infty}\sfrac{1}{n}\log\P\big\{ \tilde{\skril}\otimes \phi\in F\big\}=\lim_{n\to\infty}\sfrac{1}{n}\log\P\big \{ \tilde{\skril}\in F_{\phi}\big\}\le -\inf_{\omega\in F_{\phi}}I_2(\omega)=-\inf_{\omega\otimes \phi\in F} I_2(\omega\otimes \phi)$$

Suppose  G  is  open subset  of   $\skrip[{(\skrix\times\skrix_k^*)}^2]$   the  we  have also  that  $$\lim_{n\to\infty}\sfrac{1}{n}\log\P\big\{ \tilde{\skril}\otimes \phi\in G\big\}=\lim_{n\to\infty}\sfrac{1}{n}\log\P\big \{ \tilde{\skril}\in G_{\phi}\big\}\ge -\inf_{\omega\in G_{\phi}}I_2(\omega)=-\inf_{\omega\otimes \phi\in G}I_2(\omega\otimes \phi)$$

\end{proof}

  By  Lemma~\ref{AEP6} and  the  contraction  principle  applied  to  the  linear  mapping  $\tilde{\skril}(\phi)(\skria_x,\skria_y)=\skril(\skria_x,\skria_y),$  we  have  that  $\skril$   obeys  a  LDP  on  the  space  $\skrip\big[\phi(\skrix^2\times{\skrix_k^*}^2)]=\skrip\big[(\skrix\times{\skrix_k^*})^2\big]$ with  rate  function  $$I_1(\nu)=\Big\{I_2(\omega\otimes \phi):\, \omega\otimes \phi=\nu \Big\}=\begin{aligned} \left\{ \begin{array}{ll}H\big(\nu\,\|\,\nu_{1,1}\otimes\skrik\times\nu_{2,1}\otimes\skrik), & \mbox{if $\nu$  is  shift-invariant,}\\
\infty & \mbox{otherwise.}

\end{array}\right.
\end{aligned}$$

\subsection{Proof  of  Theorem~~\ref{AEP1}}

(i)    Notice  $\displaystyle\rho^{(n)}(X,Y)=\langle\rho, \,\skril_n^{(X,Y)}\rangle$  and  if  $\Gamma$  is open (closed)   subset  of  $\skrim((\skrix\times{\skrix_k^*})^2)$ then  $$ \Gamma_{\rho}:=\big\{ \omega: \langle\rho, \,\omega\rangle\in \Gamma\big\}$$ is  also open (closed) set  since  $\rho$  is  bounded function.

$$\begin{aligned}
-\inf_{z\in In(\Gamma)}I_{\rho}(z)&=-\inf_{\omega\in\ln(\Gamma_{\rho})}I_1(\omega)\\
&\le\liminf_{n\to\infty}\sfrac{1}{n}\log\P \Big\{\rho^{(n)}(X,Y)\in\Gamma\big |X=x,V(T)=n\Big\}\\
 &\le\lim_{n\to\infty}\sfrac{1}{n}\log\P \Big\{\rho^{(n)}(X,Y)\in\Gamma\big |X=x,V(T)=n\Big\}\\ &\le\limsup_{n\to\infty}\sfrac{1}{n}\log\P\Big\{\rho^{(n)}(X,Y)\in\Gamma\big |X=x,V(T)=n\Big\}\le -\inf_{\omega\in cl(\Gamma_{\rho})}I_1(\omega)=-\inf_{z\in cl(\Gamma)}I_{\rho}(z).
 \end{aligned}$$

(ii)  Observe  that  $\rho$ are  bounded, therefore  by  Varadhan's  Lemma and  convex duality, we  have
$$R( \P^x, \P^y, d)=\sup_{t\in\R}[td-\Lambda_{\infty}(t)]=\Lambda_{\infty}^{*}(d)$$
where
$$\Lambda_{\infty}^{*}(t):=\lim_{n\to\infty}\sfrac{1}{n}\log \int e^{nt\Big\langle\rho, \,\skril_n^{(X,Y)}\Big\rangle}dQ_n(y)$$
exits  for  $\P$ almost  everywhere  $x.$  Using  bounded  convergence,  we  can  show  that  $$\Lambda_{\infty}(t)=\lim_{n\to\infty}\Lambda_n(t)
:=\lim_{n\to\infty}\sfrac{1}{n}\int \Big[\log \int e^{nt\Big\langle\rho, \,\skril_n^{(X,Y)}\Big\rangle}dQ_n(y)\Big]dP_n(x).$$

 Define  the  matrix  $A:\skrix^2\times\skrix^2\to \R_+\bigcup\{0\}$  by   $$A[(a,\hat{a}),(b,\hat{b})]=\sum_{(c,\hat{c})\in{\skrix^{*}}^2}m(a,c)m(\hat{a},\hat{c})\skrik_x\{c \,|\,b\}\skrik_y\{\hat{c} \,|\,\hat{b}\}.$$ As   $X,Y$  is  critical  and irreducible  the  matrix  $A$ is  irredicible  and  the  largest  eigen value  is $1.$   Therefore there  exists a unique Perron-Frobenius  eigen vector $\pi,$ ( normalized to  probability vector) corresponding  to this  largest  eigen  value   (of  the  matrix $A$), see \cite[Lemma~3.18]{DA10}, such  that
$$\sfrac{1}{n}\Lambda(nt)=\frac{1}{n}\sum_{j=1}^{n}\log\me_{Q_n}\big(e^{t\rho(\skria_x(j),\skria_y(j)}\big)\to\langle \log \langle e^{t\rho(\skria_X,\skria_Y)},\pi_1\otimes \skrik_x\rangle,\pi_2\otimes\skrik_y\rangle=d_{av},$$
where  $\pi_1$  and  $\pi_2$  are  the  first  and  second  marginals  of  $\pi,$ respectively. Recall  that by  $x\,\skrid\,\, p$ we mean  $x$ distributed  as  $P.$  Also let  $$D_{min}^{(n)}:=\lim_{t\downarrow-\infty}\sfrac{\Lambda_n(t)}{t}$$
so  that  $\Lambda_n^{*}(d)=\infty$ for  $d< d_{min}^{(n)}$,  while   $\Lambda_n^{*}(D)<\infty$ for  $d>d_{min}^{(n)}.$  Observe  that  for  $n<\infty$  we  have   $D_{min}^{(n)}(d)=\me_{P_n}\big[\essinf_{Y\,\skrid\, Q_n}\rho^{(n)}(X,Y)\big],$  which  converges to  $d_{min}.$
 Using  similar  arguments  as  \cite[Proposition~2]{DK02}  we  obtain  $$ R_n(P_n,Q_n,d)=\sup_{t\in\R}\big(td-\Lambda_n(t)\big):=\Lambda_{n}^{*}(d)$$

Now  we   observe  from \cite[Page 41]{DK02}  that  the  converge  of  $\Lambda_{n}^{*}(\cdot)\to\Lambda_{\infty}(\cdot)$  is  uniform on  compact  subsets  of  $\R.$  Moreover, $\Lambda_{n}$    convex,  continuous  functions    converge  informally  to  $\Lambda_{\infty}$  and  hence  we  can invoke  \cite[Theorem 5]{Sce48}  to  obtain

 $$\Lambda_{n}^{*}(d)=\lim_{\delta\to 0}\limsup_{n\to\infty}\inf_{|\hat{d}-d|<\delta}\Lambda_{n}^{*}(\hat{d}).$$

  Using similar  arguments as \cite[Page 41]{DK02} in  the lines  after  equation  (64)  we  have  \eqref{AEP11}  which  completes  the  proof.



\end{document}